\documentclass[10pt]{article}
\usepackage[letterpaper]{geometry}
\usepackage{hicss}
\usepackage{times}
\usepackage[none]{hyphenat}
\usepackage{url}
\usepackage{latexsym}
\usepackage{indentfirst}
\usepackage{graphicx}
\graphicspath{{images/}}

\usepackage[
    backend=biber,
    style=ieee,
    sorting=none,
  ]{biblatex}
\usepackage{epsfig} 
\usepackage{mathptmx} 
\usepackage{times} 
\usepackage{amsmath} 
\usepackage{amsthm}
\usepackage{amssymb}  
\usepackage{multirow}
\usepackage{booktabs}
\usepackage{tikz-cd}
\usepackage{empheq}

\newcommand{\widefbox}[1]{\fbox{\hspace{1em}#1\hspace{1em}}}
\usepackage[caption=false]{subfig}
\usepackage{accents}
\usepackage[ruled,vlined]{algorithm2e}
\usepackage[cal=cm]{mathalfa} 
\usepackage{makecell}

\newtheorem{theorem}{Theorem}[section]

\newtheorem{proposition}[theorem]{Proposition}

\usepackage{hyperref}
\usepackage{xurl}

\newcommand{\mb}{\mathbf}
\newcommand{\mc}{\mathcal}
\newcommand{\R}{\mathbb{R}}

\newcommand{\diag}{\mathrm{diag}}
\newcommand{\gap}{\mathrm{gap}}

\title{Optimal Measurement Selection for Certifiable Voltage Monitoring in Power Distribution Systems}

\author{Peng Zhang \\
 University of Wisconsin-Madison \\
 {\underline{pzhang286@wisc.edu}} \\ \And
 Line Roald \\
 University of Wisconsin-Madison \\
 {\underline{roald@wisc.edu } } \\ \And
 Manish K. Singh \\
 University of Wisconsin-Madison \\
 {\underline{manish.singh@wisc.edu}} }

\date{}

\begin{document}
\maketitle
\begin{abstract}
Maintaining consumer-end voltages within mandated limits is a central task in operating power distribution systems. Rising penetration of distributed generation, electric vehicles, and flexible loads complicates this task, while a scarcity of real-time measurements leaves distribution networks largely unobservable. Operators commonly monitor a few bellwether meters, which indicate operational health but certify nothing about the unmeasured buses. This work asks how to select measurements that are reported in real time and can certify network-wide voltage-limit compliance. We first illustrate the geometry of the measurement values that certify voltage safety at unmeasured nodes. Building on this geometry, we formulate a bilevel optimization that selects measurements to maximize certification capability, introducing a novel safety-violation metric with favorable monotonicity properties. We derive a strong-duality-based single-level reformulation and two scenario-reduction schemes with quantifiable suboptimality. Numerical tests on the SCE 56-bus system corroborate the superior certification capability and computational efficiency of the proposed approach.
\end{abstract}

\subsubsection*{Keywords:}

Bilevel optimization, sensor selection, voltage monitoring, worst-case analysis.

\section{Introduction} 
\label{sec:intro}

A central task in the design and operation of power distribution systems is maintaining consumer-end voltages within mandated limits. Distribution systems were traditionally designed to keep voltages within the desired range under all operating conditions consistent with load-connection limits. Legacy voltage control relied on seasonally switched capacitor banks and voltage regulators acting on local measurements~\cite{Farivar2011VAR,alejandro2016tap}. This design largely obviated the need for advanced real-time monitoring and control. However, rapid demand growth, rising penetration of distributed generation and electric vehicles, and demand response programs have ushered in the era of active distribution networks \cite{Priyank2023regulation}. Constraining these distributed resources to ranges that ensure voltage limits are met would severely underutilize the existing infrastructure. Active monitoring of distribution systems is therefore of growing interest to researchers and industry alike.

Transmission system monitoring is built on estimating the system state from many near-real-time measurements together with power flow (PF) physics. Distribution system state estimation is considerably harder, primarily because a severe lack of measurements renders the network unobservable \cite{Ahmad2018DSSEsurvey}. Smart meters are now widely deployed \cite{Wang2019SMreview}, yet limited communication bandwidth means locally recorded measurements are reported to central servers only infrequently, for example, once every 24 hours~\cite{kekatos2025solving}. Monitoring of unobservable distribution systems broadly follows two approaches. The first estimates the state using pseudo-measurements, Bayesian priors, or learning-based methods~\cite{Ahmad2018DSSEsurvey,Tong2019bayesian,delavarga25learnDSSE}. Because of the fundamental limitation of unobservability, the accuracy of these estimates rests on modeling assumptions that are typically unverifiable in real time. The second approach, more common in practice, selects a few so-called bellwether meters that report in real time to operators for key inferences~\cite{islam2025smart}. These meters are chosen based on load size and importance, location, available controllable resources, and similar factors.

While indicative of the system's operational health, bellwether meters provide no certifiable guarantees about the unmeasured buses. Voltage-limit compliance was traditionally ensured through system physics and bounded load ranges. Therefore, we select measurements for real-time reporting so that this traditional guarantee is restored. Specifically, we ask: how should measurements be selected for real-time reporting so that network-wide voltage-limit compliance can be certifiably verified? Distribution system sensor selection has a rich literature, but its primary motivations have been to ensure observability, improve estimation accuracy, and detect topology or outages~\cite{Bhela2018observability,Strbac05measurement,Kar2020outageidentification}; we refer the reader to \cite{netto2022review} for a detailed review.

Recent measurement-selection works closest to our goal of verifying voltage-limit compliance include~\cite{chen2024real,buason2024data}, and \cite{molzahn2019grid}. Though motivated differently, all three evaluate the extreme voltages achievable across a distribution network given a small set of real-time measurements. Reference~\cite{molzahn2019grid} uses these extremes to assess the suitability of grid-agnostic controls. Reference~\cite{chen2024real} actively selects measurements in real time to track a network's extreme-voltage nodes. Closest to our motivation, reference~\cite{buason2024data} places a small set of measurements and derives conservative voltage thresholds for them, so that compliance with these tightened limits guarantees network-wide voltage safety. We show that the underlying geometry of safety-certifying measurements can render the safety test of~\cite{buason2024data} overconservative. We therefore advocate an optimization-based safety test and design an accompanying measurement-selection approach. Our measurement selection and safety certification formulations are built using the popular LinDistFlow model for technical simplicity. More generally, extreme achievable voltages can be computed using AC PF models, convex relaxations, or conservative linear approximations~\cite{baran2002network, farivar2013equilibrium, molzahn2019survey}; we defer these extensions for future research. Nevertheless, we numerically validate the LinDistFlow-based safety certifications against AC PF-based voltage data.

The contributions of this work are as follows: \emph{c1)} For a given set of measurement locations, we present geometric insights into the set of measurement values for which voltage safety at unmeasured nodes can be certified. \emph{c2)} We develop a bilevel optimization model that selects measurements for real-time reporting. The selection maximizes voltage-safety certification capabilities for unmeasured buses, quantified by a retrospective measure on historical data. To perform well despite the challenging geometry of the certifying measurement set, we propose a novel metric for violation of safety certification with desirable monotonicity properties in the selected measurements. \emph{c3)} To solve the bilevel problem at scale, we derive a strong-duality-based single-level reformulation. We further develop two scenario-reduction approaches, one based on voltage thresholds and one on iterative constraint generation, that ease the computation while guaranteeing quantifiable suboptimality. \emph{c4)} Numerical tests on the SCE 56-bus distribution system demonstrate the superior certification capability of our method and the computational gains from our simplifications. The scenario-reduction approaches solve the challenging measurement-selection task to optimality in our experiments.

\section{Problem setup} 
\label{sec:model}

This section introduces the linearized PF model under consideration and a worst-case analysis-based voltage-monitoring task. The distribution grid is assumed to have an abundance of smart meters. The meter data is reported with a delay of, say, 24 hours. The premise is that the operator can have real-time access to a limited set of measurements that is insufficient to estimate network-wide voltage magnitudes. Yet they are interested in certifiably ensuring that the unmeasured (i.e., unknown in real time) voltages are within the stipulated limits. We next introduce the problem setup, then describe our approach, which uses historical network-wide voltage data to select the real-time measurements that best enable certifiable voltage monitoring.

\subsection{Network modeling}
Consider a balanced three-phase power distribution network with $N+1$ buses represented by its single-phase equivalent and modeled as an undirected tree graph $\mc G = \{\overline{\mc N}, \mc L\}$. The nodes in $\overline{\mc N} = \mc N \cup \{0\}$ denote the buses, where the root node $0$ represents a substation bus, and the remaining nodes are indexed as $\mc N = \{1, \dots, N\}$. The edges in $\mc L$ denote lines with resistance and reactance represented as $r_\ell$ and $x_\ell$, respectively, for $\ell\in\mc L$. For node $i\in\mc N$, the voltage magnitude is denoted by $v_i$, and $(p_i, q_i)$ represent the net active and reactive power injections. Positive power injections $p_i(q_i) >0$ indicate that power generation exceeds the consumption at node $i$. We then stack quantities across all nodes in $\mc N$ to obtain $N$-length vectors $\mb v, \mb p$, and $\mb q$. Assuming a fixed known substation bus voltage $v_0$, the network-wide voltages can be approximated using the linearized DistFlow equations from \cite{baran2002network, farivar2013equilibrium} as
\vspace{-0.5em}\begin{equation}
    \mb v = v_0 \mathbf{1}_N + \mb R \mb p + \mb X \mb q,
    \label{eq:grid}
\end{equation} where $\mathbf{1}_N \in \R^N$ is the all-ones vector and matrices $(\mb R,~\mb X)$ are from~\cite{farivar2013equilibrium}.

\vspace{-0.5em}
\subsection{Worst-case analysis for voltage monitoring}\vspace{-0.5em}
Consider that a distribution system operator intends to assess voltage safety by checking compliance with specified node-wise lower and upper bounds $[\tilde{\mb v}^{\min}, \tilde{\mb v}^{\max}] \subseteq \R^N$. However, due to the limited communication bandwidth, only the voltage measurements from a subset of buses $\mc N^m \subseteq \mc N$ are available in real time, denoted by $\tilde{v}^m_j$ for $j\in\mc N^m$. While we do not consider real-time power injection measurements, a priori limits are assumed on these based on historical measurements, connection limits, or forecast data. Given noiseless measurements, $\tilde{v}^m_j$, a worst-case analysis over possible power injections can be used to evaluate the extreme achievable voltages at the unmeasured nodes. For instance, under the approximate PF model~\eqref{eq:grid}, the minimum achievable voltage at an unmeasured node $i \in \mc N \backslash \mc N^m$ is given by the linear program (LP) \cite{molzahn2019grid, chen2024real}
\begin{subequations} \label{eq:lp0}
\begin{align}
    \min_{\mb p, \mb q} \quad & \mb e_i^\top (\mb R \mb p + \mb X \mb q) + v_0 \notag\\
    \text{s.t.}\quad  & \underline{\mb p} \le \mb p \le \overline{\mb p}, \label{eq:lp0-2} \\
    & \underline{\mb q} \le \mb q \le \overline{\mb q}, \label{eq:lp0-3} \\
    & \mb e_j^\top (\mb R \mb p + \mb X \mb q) + v_0 = \tilde{v}^m_j, \forall j \in \mc N^m, \label{eq:lp0-4}
\end{align}
\end{subequations}
where $\mb e_i\in\R^N$ denotes the $i$-th canonical vector. Constraints \eqref{eq:lp0-2}-\eqref{eq:lp0-3} bound the net active and reactive power injections, and \eqref{eq:lp0-4} enforces the alignment with observed voltage measurements. The highest achievable voltage can likewise be obtained by maximizing the objective in~\eqref{eq:lp0} while retaining the same feasibility set.

For notational brevity, we denote the power injections as $\mb x = [\mb p^\top \mb q^\top]^\top \in \R^{2N}$ and collect the deviation of voltage measurements, $\{\tilde{v}^m_j-v_0\}_{j\in\mc N^m}$ as $\mb v^m\in\R^{|\mc N^m|}$. Problem~\eqref{eq:lp0} can then be rewritten to define a mapping from measurements to the lowest possible voltage at a node $i$ as
\begin{equation}\label{eq:lp1f}
    f_i(\mb v^m) := \min_{\mb x}~~\mb c_i^\top \mb x~~\text{s.t.}~~\underline{\mb x} \le \mb x \le \overline{\mb x},~~\mb C^m \mb x = \mb v^m,
\end{equation}
where the linear coefficients $(\mb c_i, \mb C^m)$ are obtained from the rows of matrices $(\mb R, \mb X)$ such that~\eqref{eq:lp1f} coincides with~\eqref{eq:lp0}. The mapping yielding the highest achievable voltage at node $i$ can similarly be defined as
\begin{equation}\label{eq:lp1g}
    g_i(\mb v^m) := \max_{\mb x}~~\mb c_i^\top \mb x~~\text{s.t.}~~\underline{\mb x} \le \mb x \le \overline{\mb x},~~\mb C^m \mb x = \mb v^m.
\end{equation}

The LP-based mappings~\eqref{eq:lp1f}--\eqref{eq:lp1g} provide an opportunity to certify voltage safety at unmeasured nodes without precisely estimating these voltages. Specifically, given measurements $\mb v^m$, the voltage safety at node $i$ can be certified if $f_i(\mb v^m)\geq\tilde{v}^{\min}_i-v_0$ and $g_i(\mb v^m)\leq\tilde{v}^{\max}_i-v_0$; hereafter we express the voltage deviation limits as $\mb v^{\min}:=\tilde{\mb v}^{\min}-v_0\mathbf{1}_N$ and ${\mb v}^{\max}:=\tilde{\mb v}^{\max}-v_0\mathbf{1}_N$. The above test is sufficient but not necessary. For instance, an observation $f_i(\mb v^m)<v_i^{\min}$ merely indicates the existence of operating points within power injection limits $[\underline{\mb x}, \overline{\mb x}]$ that would result in measurements $\mb v^m$ while causing an undervoltage at node $i$. In such situations, without additional measurements, true undervoltage incidents cannot be distinguished from potential undervoltages. With the above observation, we pursue the following objective.

\noindent \textbf{Goal:} Given a power network, historical operating conditions, and a measurement budget, find the optimal measurement placement that maximizes the capability of network-wide voltage safety certification by solving LPs~\eqref{eq:lp1f}--\eqref{eq:lp1g} in real time.

\section{Safety-certifying voltage measurement} 
\label{sec:formulation}\vspace{-1em}

To achieve the above-identified goal, this section presents preliminary analysis and design choices that lead to an effective formulation of the problem for optimal measurement selection. 

\subsection{Geometry of safety-certifying voltage measurement sets}\label{sec:geometry}\vspace{-1 em}
The recent work from~\cite{buason2024data} pursues sensor placement towards a voltage safety certification objective similar to ours. In their approach, the safety certification of network-wide voltages was obtained by verifying conservative box constraints on real-time measurements $\mb v^m$ rather than solving the LPs~\eqref{eq:lp1f}--\eqref{eq:lp1g}. The metric for sensor placement was thus related to the dimensions of the hypercube defined by the safety-certifying box constraints. Admittedly, verifying satisfaction of node-wise box constraints in real time saves computation and communication costs when compared to solving~\eqref{eq:lp1f}--\eqref{eq:lp1g}. However, the effectiveness of such an approach hinges on how well box constraints can serve as inner approximations of the set of measurements $\mb v^m$ that can certify voltage safety using~\eqref{eq:lp1f}--\eqref{eq:lp1g}. 

To assess the related set geometries, we present a grid-search-based numerical analysis on the SCE 56-bus system~\cite{farivar2012optimal} where the power injections were sampled as described in Section~\ref{sec:simulation}. Two distinct measurement sets $\mc N^m=\{20,31\}$ and $\mc N^m=\{7,13\}$ were considered for certifying voltage safety on bus~$42$. As per the PF model~\eqref{eq:grid}, the power injection scenarios map to the voltage measurements shown as colored points in Fig.~\ref{fig:intro}. Observe that a hypercube of power injections $[\underline{\mb x}, \overline{\mb x}]$ maps to a compact polytope in the measurement space. Masking the knowledge of true power injections, we solved~\eqref{eq:lp1f}--\eqref{eq:lp1g} using the measurements $\mb v^m$ for each instance and classified them in Fig.~\ref{fig:intro} as follows: the blue points denote instances where $\mb v^m$ certifies satisfaction of upper and lower limits at bus 42; green and orange points indicate potential under- and over-voltages, respectively. The safety-certifying region obtained using the approach of~\cite{buason2024data} is shaded gray.

\begin{figure}[t]
  \centering
  \begin{minipage}[b]{\columnwidth}
    \centering
    \subfloat[$\mc N^m= \{20, 31\}$]
      {\includegraphics[width=0.78\textwidth]{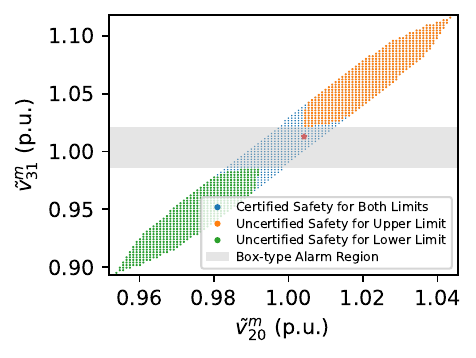}}
  \end{minipage}%
  \vspace{0.2cm}
  \begin{minipage}[b]{\columnwidth}
    \centering
    \subfloat[$\mc N^m= \{7, 13\}$]
      {\includegraphics[width=0.78\textwidth]{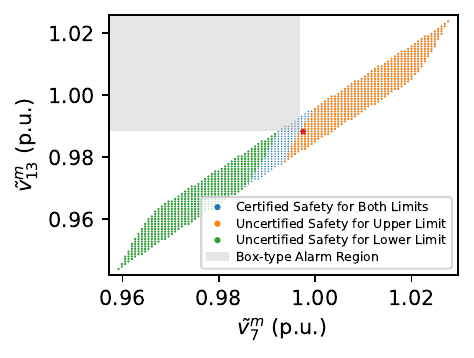}}
  \end{minipage}
  \caption{Classification of voltage measurements from SCE 56-bus system based on certifiability of voltage safety at unmeasured bus $42$.}
  \vspace{-1 em}
  \label{fig:intro}
\end{figure}

The following observations are in order: \emph{i)} The true safety-certifying region can be a non-convex set (see blue dots in Fig.~\ref{fig:intro}~a); \emph{ii)} Many certifiably safe measurements are missed by the box approximations; \emph{iii)} Maximizing the box's linear dimensions may result in real-time safety certifications of limited value, as a significant portion of the measurement space contains physically unrealizable measurements due to bounded power injections (see Fig.~\ref{fig:intro}~b); and \emph{iv)} a particular operating condition, see the red stars in Fig.~\ref{fig:intro} for an example, can lead to safety certification under one set of measurements and could be uninformative under another. Thus, careful selection of the measurement set is well motivated. The above observations demonstrate the limitations in using box constraints for safety certification. Yet, a direct characterization of this non-convex set could be challenging, let alone placing measurements to enlarge it. We therefore propose an alternative metric to guide measurement selection while directly solving~\eqref{eq:lp1f}--\eqref{eq:lp1g} for real-time safety certification.

\subsection{Metric for violation of safety certification}
\vspace{-0.7em}
Recall that given the voltage measurements $\mb v^m$ from a set of measured nodes $\mc N^m \subseteq \mc N$, the minimum and maximum achievable voltages at node $i$ are evaluated as $f_i(\mb v^m)$ and $g_i(\mb v^m)$, respectively. Using these, we quantify the violation of safety certification as \vspace{-1em} \begin{equation}
    h(\mb v^m) = \sum_{i = 1}^N([v_i^{\min} - f_i(\mb v^m)]_+ + [g_i(\mb v^m) - v_i^{\max}]_+).
\end{equation} Note that $f_i(\mb v^m)=g_i(\mb v^m)=v^m_i$ for all measured nodes $i\in\mc N^m$ from~\eqref{eq:lp1f}--\eqref{eq:lp1g}. When network-wide voltage safety can be certified, i.e., all measured voltages and extreme achievable voltages are within stipulated limits, $h(\mb v^m)=0$. Otherwise, the measure $h(\cdot)$ sums up the measured voltage violations and the maximum possible violations at unmeasured nodes.

We will base our measurement selection approach on a retrospective analysis of a historical dataset of network-wide voltage measurements, denoted as $\mc D=\{\mb v_t\}_{t=1}^T$. Given a candidate measurement selection $\mc N^m$, the corresponding measurement dataset can be extracted from $\mc D$ by selecting the respective entries of $\mb v_t$. We denote such dataset as $\mc D^m=\{\mb v^m_t\}_{t=1}^T$. The accumulated violation of safety certification can then be evaluated as
\begin{equation}\label{eq:H1}
   H(\mc N^m;\mc D^m) = \sum_{t = 1}^T h(\mb v^{m}_t),
\end{equation}
and can be used as a measure to assess the suitability of measurement locations $\mc N^m$ for voltage safety certification. To formulate the meter selection task, we introduce a location indicator vector $\mb s\in\{0,1\}^N$ such that $s_i=1,~\forall~i\in\mc N^m$ and $s_i=0$, otherwise. 

We next reformulate \eqref{eq:lp1f}, \eqref{eq:lp1g}, and \eqref{eq:H1} to completely capture the selected measurement-set dependence via vector $\mb s$. To do so, it is helpful to represent the measured voltage information using the big-$M$ formulation as 
\begin{subequations}\label{eq:vs}
\begin{align}
    & \underline{\mb v}_t = \diag(\mb s) \mb v_{t} - M (\mathbf{1}_{N} - \mb s), \label{eq:upp1}\\
    & \overline{\mb v}_{t} = \diag(\mb s) \mb v_{t} + M (\mathbf{1}_{N} - \mb s).  \label{eq:upp2} 
\end{align}
\end{subequations}
The above definition dictates that for the selected measurements with $s_i=1$, $\underline{v}_{t,i}=\overline{v}_{t,i}=v^m_{t,i}$ while for the unmeasured nodes, $\underline{v}_{t,i}=-M$ and $\overline{v}_{t,i}=M$ for some large constant $M$. Note that the dimension of $\mb v^m$ depends on the number of measurements, while the padded vectors $\{\underline{\mb v}, \overline{\mb v}\}$ convey the measurement information without changing their lengths. We can next evaluate the extreme achievable voltages by recasting \eqref{eq:lp1f}--\eqref{eq:lp1g} as 
\begin{subequations}\label{eq:fgtilde}
\begin{align}
    \tilde{f}_i(\underline{\mb v}_{t}, \overline{\mb v}_{t})&= \min_{\mb x}~\mb c_i^\top \mb x~\text{s.t.}~~\mb x\in[\underline{\mb x}, \overline{\mb x}],~\mb C \mb x\in[\underline{\mb v}_{t}, \overline{\mb v}_{t}], \label{eq:low1}\\
    \tilde{g}_i(\underline{\mb v}_{t}, \overline{\mb v}_{t})&= \max_{\mb x}~\mb c_i^\top \mb x~\text{s.t.}~~\mb x\in[\underline{\mb x}, \overline{\mb x}],~\mb C \mb x\in[\underline{\mb v}_{t}, \overline{\mb v}_{t}], \label{eq:low2}
\end{align}
\end{subequations}
where matrix $\mb C=[\mb R~\mb X]$ stems from~\eqref{eq:grid}. Finally, the cumulative measure for violation of safety certification over the historical dataset $\mc D$ becomes a function of $\mb s$ as
\begin{equation}\label{eq:Htilde}
\begin{split}
    \widetilde{H}(\mb s; \mc D) = \sum_{t = 1}^T \sum_{i=1}^N
    &\bigl([v_i^{\min} - \tilde{f}_i(\underline{\mb v}_{t}, \overline{\mb v}_{t})]_+ \\
    &+ [\tilde{g}_i(\underline{\mb v}_{t}, \overline{\mb v}_{t}) - v_i^{\max}]_+\bigr).
\end{split}
\end{equation}
It can be readily verified that if the dataset $\mc D$ does not contain voltage violations and if measurement selection $\mb s$ is adequate to certify voltage safety at all unmeasured nodes for all $T$ scenarios, the cumulative violation metric becomes $\widetilde{H}(\mb s; \mc D)=0$, else $\widetilde{H}(\mb s; \mc D)>0$. The next result shows that metric $\widetilde{H}(\mb s; \mc D)$ features a favorable monotonicity property. Specifically, it is non-increasing with additions in measurement locations. 
\begin{proposition}\label{prop1}
    (Monotonicity) Consider the cumulative metric of violation $\widetilde{H}(\mb s; \mc D)$ in~\eqref{eq:Htilde} and a historical voltage measurement dataset $\mc D$. For a pair of nested measurement sets $\mc N^{m_1} \subseteq \mc N^{m_2}$ corresponding to $\mb s_1\leq\mb s_2$ (component-wise), it holds that $\widetilde{H}(\mb s_1; \mc D)\geq\widetilde{H}(\mb s_2; \mc D)$.
\end{proposition}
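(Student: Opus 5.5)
The plan is to show that adding measurements shrinks the feasible set of the inner LPs~\eqref{eq:fgtilde}. This makes each $\tilde f_i$ non-decreasing and each $\tilde g_i$ non-increasing. Monotonicity of $[\cdot]_+$ and summation over $i$ and $t$ then gives the claim.

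Fix a scenario $t$ and denote by $(\underline{\mb v}_t^{(k)},\overline{\mb v}_t^{(k)})$ the padded bounds~\eqref{eq:vs} built from $\mb s_k$, $k=1,2$. First compare these intervals componentwise. For $i$ with $s_{2,i}=0$ we also have $s_{1,i}=0$, so both intervals equal $[-M,M]$. For $i$ with $s_{1,i}=s_{2,i}=1$, both equal $\{v_{t,i}\}$. For $i$ with $s_{1,i}=0$ and $s_{2,i}=1$, the intervals are $[-M,M]\supseteq\{v_{t,i}\}$. This last inclusion needs $M\ge |v_{t,i}|$, which is exactly the role of the ``sufficiently large'' big-$M$ constant, and I will state it as a standing assumption. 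Hence $[\underline{\mb v}_t^{(2)},\overline{\mb v}_t^{(2)}]\subseteq[\underline{\mb v}_t^{(1)},\overline{\mb v}_t^{(1)}]$.

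Next, let $\mc X_t^{(k)}=\{\mb x\in[\underline{\mb x},\overline{\mb x}]:\mb C\mb x\in[\underline{\mb v}_t^{(k)},\overline{\mb v}_t^{(k)}]\}$; the interval inclusion gives $\mc X_t^{(2)}\subseteq\mc X_t^{(1)}$. I must check that $\mc X_t^{(2)}$ is nonempty, since otherwise the minimum is $+\infty$ and the comparison would still hold but be degenerate. Since $\mb v_t$ in $\mc D$ is assumed to be generated by some injection $\mb x_t\in[\underline{\mb x},\overline{\mb x}]$ via~\eqref{eq:grid}, this $\mb x_t$ lies in $\mc X_t^{(2)}$. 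Both feasible sets are also compact, so the optima exist. Minimizing over a smaller set yields $\tilde f_i^{(2)}\ge \tilde f_i^{(1)}$, and maximizing yields $\tilde g_i^{(2)}\le\tilde g_i^{(1)}$. The map $a\mapsto[a]_+$ is non-decreasing, so $[v_i^{\min}-\tilde f_i^{(2)}]_+\le[v_i^{\min}-\tilde f_i^{(1)}]_+$ and $[\tilde g_i^{(2)}-v_i^{\max}]_+\le[\tilde g_i^{(1)}-v_i^{\max}]_+$. Summing over $i$ and $t$ in~\eqref{eq:Htilde} gives $\widetilde H(\mb s_1;\mc D)\ge\widetilde H(\mb s_2;\mc D)$.

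The only delicate step is the set-inclusion argument, specifically the requirement that $M$ dominate all measured deviations. I will make this explicit, for instance with $M\ge\max_{t,i}|v_{t,i}|$. A cleaner alternative is $M\ge \max_{\mb x\in[\underline{\mb x},\overline{\mb x}]}\|\mb C\mb x\|_\infty$, under which the big-$M$ constraints are inactive and~\eqref{eq:fgtilde} coincides with~\eqref{eq:lp1f}--\eqref{eq:lp1g}. The rest is routine.
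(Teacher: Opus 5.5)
Your proof is correct and follows the same route as the paper: the padded intervals from \eqref{eq:vs} are nested, so the LPs in \eqref{eq:fgtilde} for $\mb s_1$ are relaxations of those for $\mb s_2$, and monotonicity of $[\cdot]_+$ plus summation over $i$ and $t$ finishes the argument. Your standing assumption $M\ge\max_{t,i}|v_{t,i}|$ and your nonemptiness check via the realizing injection $\mb x_t$ make explicit two points the paper leaves implicit, namely its ``some large constant $M$'' and the claim that the LPs are ``feasible by construction''.
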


\begin{proof}
Consider scenario $t$ from $\mc D$ and evaluate the hypercubes $[\underline{\mb v}_{t,1}, \overline{\mb v}_{t,1}]$ and $[\underline{\mb v}_{t,2}, \overline{\mb v}_{t,2}]$ corresponding to $\mb s_1$ and $\mb s_2$, respectively, using~\eqref{eq:vs}. Assumption $\mb s_1\leq\mb s_2$ entails $[\underline{\mb v}_{t,2}, \overline{\mb v}_{t,2}]\subseteq[\underline{\mb v}_{t,1}, \overline{\mb v}_{t,1}]$. Thus, the computation of extreme achievable voltages using~\eqref{eq:fgtilde} for $\mb s_1$ is a relaxation of the computation for $\mb s_2$, implying
\begin{align*}
    \tilde{f}_i(\underline{\mb v}_{t,1}, \overline{\mb v}_{t,1})&\leq \tilde{f}_i(\underline{\mb v}_{t,2}, \overline{\mb v}_{t,2}),~\forall~i,t\\
    \tilde{g}_i(\underline{\mb v}_{t,1}, \overline{\mb v}_{t,1})&\geq \tilde{g}_i(\underline{\mb v}_{t,2}, \overline{\mb v}_{t,2}),~\forall~i,t.
\end{align*}
Substituting the inequalities in~\eqref{eq:Htilde} establishes the claim~$\widetilde{H}(\mb s_1; \mc D)\geq\widetilde{H}(\mb s_2; \mc D)$.
\end{proof}

The above-identified properties make $\widetilde{H}(\mb s; \mc D)$ a suitable cost function for optimal selection of measurement locations. We next formulate a bilevel optimization-based formulation for measurement selection that minimizes $\widetilde{H}(\mb s; \mc D)$.

\subsection{Bilevel optimization for measurement selection}
Given the historical dataset $\mc D$, and a measurement budget $\rho$, the measurement selection task becomes
    \begin{align}
        \min\quad &\widetilde{H} (\mb s; \mc D)\label{eq:bilevel} \\ 
        \text{over} \quad & {\mb s\in \{0, 1\}^N}, \underline{\mb v}_{t}, \overline{\mb v}_{t} \in \R^{N}, \forall~t,  \notag\\
        \text{s.t.} \quad & \eqref{eq:vs}-\eqref{eq:Htilde},~ \mathbf{1}_{N}^\top \mb s \le \rho. \notag
    \end{align}
The optimization in~\eqref{eq:bilevel} constitutes a bilevel problem where the upper level has a linear objective $\widetilde{H}(\mb s; \mc D)$ with continuous variables $\underline{\mb v}_{t}, \overline{\mb v}_{t}$, binary variables $\mb s$, and mixed-integer constraints. The problem comprises $2NT$ lower-level problems~\eqref{eq:fgtilde} that are LPs. The computational cost of directly solving~\eqref{eq:bilevel} can be limiting even for moderate network and dataset sizes. Therefore, we next develop reformulations and approximations of~\eqref{eq:bilevel} to tractably resolve our measurement selection objective. 

\section{Reformulation and approximations towards computational tractability} 
\label{sec:alg}
Given the bilevel structure, we develop a computationally efficient framework in this section to solve \eqref{eq:bilevel} without sacrificing optimality. By exploiting the special structure on the coupling of upper and lower problems, a single-level optimization is first obtained via the strong duality of \eqref{eq:low1}-\eqref{eq:low2}. To further reduce the computational burden, we develop methods for spatiotemporal subselection of scenarios from $\mc D$, along with mechanisms to assess and ensure the optimality with respect to the full-scale problem.
\subsection{Duality-based single-level reformulation}\vspace{-0.8em}
Consider the following epigraph-based equivalent reformulation of \eqref{eq:bilevel}:
\begin{subequations}\label{eq:single-level}
    \begin{align}
        \min \quad &  \sum_{t = 1}^T \sum_{i =1}^{N}  (l_{i,t} + u_{i,t}) \notag\\
        \text{over}\quad & \mb s, \underline{\mb v}_{t}, \overline{\mb v}_{t}, l_{i,t}, u_{i,t},~\forall~i,~t \notag \\
        \text{s.t.} \quad & \eqref{eq:vs}-\eqref{eq:fgtilde}~\text{and}\notag\\
        &~\mathbf{1}_N^\top \mb s = \rho, \label{eq:rhoeq}\\
        & v_i^{\min} - l_{i,t} \le \tilde{f}_i(\underline{\mb v}_{t}, \overline{\mb v}_{t}),~\forall~i,~t, \label{eq:dual1}\\
        & u_{i,t} + v_i^{\max} \ge\tilde{g}_i(\underline{\mb v}_{t}, \overline{\mb v}_{t}),~\forall~i,~t,  \label{eq:dual2} \\
        & l_{i,t} \ge 0, \: u_{i,t} \ge 0,~\forall~i,~t,
    \end{align}
\end{subequations}
where $(l_{i,t},u_{i,t})$ are non-negative auxiliary variables. Note that constraint~\eqref{eq:rhoeq} replaces the budget inequality in~\eqref{eq:bilevel} with an equality. The replacement is backed by Proposition~\ref{prop1} that guarantees the existence of a minimizer of~\eqref{eq:bilevel} satisfying $\mathbf{1}_N^\top\mb s=\rho$, and simplifies computation by restricting the feasible region. Constraints~\eqref{eq:dual1}-\eqref{eq:dual2} render~\eqref{eq:single-level} a bilevel problem. 

The LPs in~\eqref{eq:fgtilde} are bounded and feasible by construction. Therefore, solving the dual problems yields a zero duality gap. Hence, $\tilde{f}_i(\underline{\mb v}_{t}, \overline{\mb v}_{t})$ can be alternatively computed via the maximization task \vspace{-0.5em}
    \begin{align}\label{eq:dual3}
    \tilde{f}_i(\underline{\mb v}_{t}, \overline{\mb v}_{t})=& \max\quad \underline{\alpha}_{i,t}^\top \underline{\mb x} - \overline{\alpha}_{i,t}^\top  \overline{\mb x} + \underline{\beta}_{i,t}^\top \underline{\mb v}_{t}  - \overline{\beta}_{i,t}^\top \overline{\mb v}_{t} \\
    & \text{over} \quad \underline{\alpha}_{i,t}, \overline{\alpha}_{i,t}, \underline{\beta}_{i,t}, \overline{\beta}_{i,t} \ge 0 \notag\\
    & \text{s.t.} \quad  \mb c_i - \underline{\alpha}_{i,t} + \overline{\alpha}_{i,t} - \mb C^\top( \underline{\beta}_{i,t} - \overline{\beta}_{i,t}) = 0.\notag
    \end{align}   
Similarly, the dual problem for computing $\tilde{g}_i(\underline{\mb v}_{t}, \overline{\mb v}_{t})$ is \vspace{-0.5em}
\begin{align}\label{eq:dual4}
    \tilde{g}_i(\underline{\mb v}_{t}, \overline{\mb v}_{t})=& \min\quad \overline{\gamma}_{i,t}^\top \overline{\mb x} - \underline{\gamma}_{i,t}^\top  \underline{\mb x} + \overline{\delta}_{i,t}^\top \overline{\mb v}_{t} -\underline{\delta}_{i,t}^\top \underline{\mb v}_{t} \\
    & \text{over} \quad \underline{\gamma}_{i,t}, \overline{\gamma}_{i,t}, \underline{\delta}_{i,t}, \overline{\delta}_{i,t} \ge 0 \notag\\
    & \text{s.t.} \quad  \mb c_i + \underline{\gamma}_{i,t} - \overline{\gamma}_{i,t} + \mb C^\top( \underline{\delta}_{i,t} - \overline{\delta}_{i,t}) = 0.\notag
    \end{align}   
If one replaces constraint~\eqref{eq:low1} for computing $\tilde{f}_i(\underline{\mb v}_{t}, \overline{\mb v}_{t})$ in problem~\eqref{eq:single-level} by the dual formulation~\eqref{eq:dual3}, the following advantage surfaces: since constraint~\eqref{eq:dual1} enforces the maximum value from~\eqref{eq:dual3} to be larger than the left hand side, the maximization task can be replaced by the feasibility conditions of~\eqref{eq:dual3}. Similar arguments hold for evaluating $\tilde{g}_i(\underline{\mb v}_{t}, \overline{\mb v}_{t})$ in~\eqref{eq:dual2}. Thus, constraints~\eqref{eq:dual1}--\eqref{eq:dual2} are equivalent to
\begin{subequations}\label{eq:dual56}
\begin{align}
    & v_i^{\min} - l_{i,t} \le \underline{\alpha}_{i,t}^\top \underline{\mb x} - \overline{\alpha}_{i,t}^\top  \overline{\mb x} + \underline{\beta}_{i,t}^\top \underline{\mb v}_{t}  - \overline{\beta}_{i,t}^\top \overline{\mb v}_{t}, \label{eq:dual5}\\
    & u_{i,t} + v_i^{\max} \ge \overline{\gamma}_{i,t}^\top \overline{\mb x} - \underline{\gamma}_{i,t}^\top  \underline{\mb x} + \overline{\delta}_{i,t}^\top \overline{\mb v}_{t} -\underline{\delta}_{i,t}^\top \underline{\mb v}_{t},  \label{eq:dual6} 
\end{align}
\end{subequations}
alongside the constraints of~\eqref{eq:dual3}--\eqref{eq:dual4}. With the aforementioned substitution, problem~\eqref{eq:single-level} simplifies to a single-level optimization, albeit with bilinear terms in~\eqref{eq:dual56}. Notably, the single-level problem has $\mc O(NT)$ constraints featuring $\mc O(N)$ bilinear terms each. If the bilevel problem were simplified by including the KKT conditions of the lower-level problems, rather than the presented approach that exploits strong duality, the complexity would be higher~\parencite{buason2024data}. Since the computational challenges are also exacerbated by the binary variables $\mb s$, we next pursue scenario-reduction approaches that further simplify the measurement selection task.

\subsection{Problem relaxation via scenario reduction}
We observe that if the minimizer of~\eqref{eq:single-level} adequately certifies voltage safety at bus $i$ for a scenario $t$, then the corresponding constraints in~\eqref{eq:dual1}-\eqref{eq:dual2} (or equivalently,  in \eqref{eq:dual56}) shall remain inactive. In practice, a huge majority of these $2NT$ constraints are inactive at optimality. For computational ease, one could solve the measurement selection problem with fewer constraints, in the hope of still recovering a good solution. Let the retained subset of constraints in~\eqref{eq:dual5} and \eqref{eq:dual6} be indexed by the node-scenario tuples $(i,t)\in\mc R^l$ and $(i,t)\in\mc R^u$, respectively. The simplified single-level problem can be formulated as

\begin{subequations}
\begin{align}
     \min \quad &\sum_{(i,t)\in\mc R^l} l_{i,t} + \sum_{(i,t)\in\mc R^u} u_{i,t}
        \tag{\theparentequation}\label{eq:relax}\\
    \text{s.t.} \quad & \eqref{eq:vs},~\mathbf{1}_N^\top \mb s = \rho, \notag
\end{align}
\begin{empheq}[box=\widefbox]{align}
    & \forall\,(i,t)\in\mc R^l\text{:} \notag\\
    & v_i^{\min} - l_{i,t} \le \underline{\alpha}_{i,t}^\top \underline{\mb x}
        - \overline{\alpha}_{i,t}^\top \overline{\mb x} \notag\\
    &\qquad + \underline{\beta}_{i,t}^\top \underline{\mb v}_{t}
        - \overline{\beta}_{i,t}^\top \overline{\mb v}_{t}, \notag\\
    & \mb c_i - \underline{\alpha}_{i,t} + \overline{\alpha}_{i,t}
        - \mb C^\top(\underline{\beta}_{i,t} - \overline{\beta}_{i,t}) = 0, \notag\\
    & \underline{\alpha}_{i,t},\overline{\alpha}_{i,t},
        \underline{\beta}_{i,t},\overline{\beta}_{i,t}, l_{i,t} \ge 0 \notag
\end{empheq}
\begin{empheq}[box=\widefbox]{align}
    & \forall\,(i,t)\in\mc R^u\text{:} \notag\\
    & u_{i,t} + v_i^{\max} \ge \overline{\gamma}_{i,t}^\top \overline{\mb x}
        - \underline{\gamma}_{i,t}^\top \underline{\mb x} \notag\\
    &\qquad + \overline{\delta}_{i,t}^\top \overline{\mb v}_{t}
        - \underline{\delta}_{i,t}^\top \underline{\mb v}_{t}, \notag\\
    & \mb c_i + \underline{\gamma}_{i,t} - \overline{\gamma}_{i,t}
        + \mb C^\top(\underline{\delta}_{i,t} - \overline{\delta}_{i,t}) = 0, \notag\\
    & \underline{\gamma}_{i,t},\overline{\gamma}_{i,t},
        \underline{\delta}_{i,t},\overline{\delta}_{i,t}, u_{i,t} \ge 0 \notag
\end{empheq}
\end{subequations}
Unless the sets $\mc R^l$ and $\mc R^u$ contain all node-scenario pairs, problem~\eqref{eq:relax} is a relaxation of the original measurement selection problem~\eqref{eq:bilevel}. Let $\mb s_R$ be a minimizer of~\eqref{eq:relax} and $\widetilde{H}_R$ represent the optimum value. It can be verified that all binary vectors satisfying $\mathbf{1}_N^\top\mb s\leq \rho$ are feasible for~\eqref{eq:bilevel} and the corresponding violation metric over the entire dataset $\mc D$ can be evaluated from~\eqref{eq:vs}--\eqref{eq:Htilde} by solving the $2NT$ LPs in~\eqref{eq:fgtilde}. Thus, $\tilde{H}_{R}$ serves as a lower bound for the optimum value of~\eqref{eq:bilevel} while $\widetilde{H}(\mb s_R; \mc D)$ provides an upper bound. The optimality gap is thus upper-bounded by
\begin{equation}\label{eq:gap}
    \gap=\widetilde{H}(\mb s_R; \mc D)-\widetilde{H}_R.
\end{equation}
Obtaining zero $\gap$ certifies optimality of the incumbent solution~$\mb s_R$. Hence, we will evaluate and design our scenario reduction approach using the above gap metric.

\subsection{Scenario reduction methods}\label{sec:scenarioreduction}
This section delineates two approaches for constructing the reduced scenario sets~$(\mc R^l,\mc R^u)$. The general idea is to identify tuples $(i,t)$ from the overall dataset~$\mc D$ that are likely to contribute towards active constraints in~\eqref{eq:dual1}--\eqref{eq:dual2}. The $(i,t)$ instances with actual voltage violations are therefore guaranteed choices. Intuitively, $(i,t)$ instances with voltages close to the thresholds may be considered at higher risk of not being certified safe unless they are selected for direct observation. Hence, our first approach heuristically selects voltage thresholds to screen~$\mc D$ and select $(i,t)$ instances for $(\mc R^l,\mc R^u)$. However, the solution's optimality can be guaranteed only if the voltage safety for all scenarios not included in~$(\mc R^l,\mc R^u)$ can be certified by the incumbent minimizer of~\eqref{eq:relax}. Therefore, our second approach builds on constraint-generation methods for iteratively augmenting~$(\mc R^l,\mc R^u)$ unless optimality can be guaranteed. 

\subsubsection{Fixed voltage threshold-based scenario selection.} 

All instances in $\mc D$ are screened to subselect the reduced scenario sets as 
\begin{equation*}
    \begin{aligned}
        \mc R^l &= \{(i, t) \: | \: v_{t,i} \leq v_i^{\min}+{\sigma},~\forall~i\in~[1,N], t\in~[1,T]\}, \\
        \mc R^u &= \{(i, t) \: | \: v_{t,i} \geq v_i^{\max}-{\sigma},~\forall~i\in~[1,N], t\in~[1,T]\},
        \end{aligned}
\end{equation*}
where $\sigma$ is a nonnegative constant that tunes the size of the reduced scenario sets. It could be made distinct for upper/lower limits and varied across nodes, if needed.

\begin{algorithm}[t]
\caption{Iterative Constraint Generation}\label{alg}
\KwIn{Historical dataset $\mathcal D=\{\mathbf v_t\}_{t=1}^T$;
margin step sizes $(\delta^l,\delta^u)$; optimality-gap tolerance $\epsilon$}
\KwOut{Certified solution $\mathbf s^\star$}
\textbf{Initialization} (Scenarios with violations):
\[
\begin{aligned}
\mathcal R^l_0&=\{(i,t)\mid v_{t,i}\leq v^{\min}_i\},\\
\mathcal R^u_0&=\{(i,t)\mid v_{t,i}\geq v^{\max}_i\}.
\end{aligned}
\]
\For{$k=0,1,\ldots$}{
  \tcp{Reduced problem}
  Solve~\eqref{eq:relax} over $(\mathcal R^l_k,\mathcal R^u_k)$ for minimizer
  $\mathbf s_R^k$ and optimum value $\widetilde H_R^k$\;
  \tcp{Candidate generation (increasing margin $(k+1)\delta$)}
  $\begin{aligned}
  \mathcal C_k^l&=\{(i,t)\notin\mathcal R_k^l \mid v_{t,i}\leq v^{\min}_i+(k+1)\delta^l\},\\
  \mathcal C_k^u&=\{(i,t)\notin\mathcal R_k^u \mid v_{t,i}\geq v^{\max}_i-(k+1)\delta^u\}.
  \end{aligned}$\;
  \tcp{Safety certification of candidates under $\mathbf s_R^k$}
  For each $(i,t)\in\mathcal C_k^l$ solve~\eqref{eq:low1} with~\eqref{eq:vs},
  and for each $(i,t)\in\mathcal C_k^u$ solve~\eqref{eq:low2} with~\eqref{eq:vs}\;
  $\begin{aligned}
  \Delta_k^l&=\{(i,t)\in\mathcal C_k^l \mid \tilde f_i(\underline{\mathbf v}_t,\overline{\mathbf v}_t)\le v^{\min}_i\},\\
  \Delta_k^u&=\{(i,t)\in\mathcal C_k^u \mid \tilde g_i(\underline{\mathbf v}_t,\overline{\mathbf v}_t)\ge v^{\max}_i\}.
  \end{aligned}$\;
  Update reduced scenario set
  $\mathcal R_{k+1}^l=\mathcal R_k^l\cup\Delta_k^l,\quad
   \mathcal R_{k+1}^u=\mathcal R_k^u\cup\Delta_k^u$\;
  \If{$\Delta_k^l=\emptyset$ \textbf{and} $\Delta_k^u=\emptyset$}{
 \tcp{Optimality check} 
    \If{$\widetilde H(\mathbf s_R^k;\mathcal D)-\widetilde H_R^k\le\epsilon$}{
    \Return $\mathbf s_R^k$\;}}
}
\end{algorithm}

\subsubsection{Iterative constraint generation-based scenario selection.} 

The performance of the above fixed voltage threshold-based approach depends critically on the choice of the margin~$\sigma$. High margins would result in large cardinalities of~$(\mc R^l,\mc R^u)$ making~\eqref{eq:relax} computationally expensive, while small margins may lead to an unacceptable optimality gap. To alleviate the challenge of explicitly tuning~$(\sigma)$, we propose Algorithm~\ref{alg} that enlarges the lower and upper voltage margins in steps of $\delta^l$ and $\delta^u$, respectively. At each step, the additional $(i,t)$ scenarios identified with loosened thresholds are checked for voltage safety certification using~\eqref{eq:vs}--\eqref{eq:fgtilde} with the last computed measurement selection via~\eqref{eq:relax}. The instances with uncertified safety are included in $(\mc R^l,\mc R^u)$ and~\eqref{eq:relax} is resolved. The algorithm terminates when the optimality gap is bounded under the predetermined tolerance~$\epsilon$.


\section{Numerical tests} 
\label{sec:simulation}

The developed measurement selection approaches were evaluated using the SCE 56-bus distribution network with parameters from~\cite{farivar2012optimal}. The tests were designed to first evaluate the optimality and scalability of the proposed approach and then compare the obtained measurement selection to two alternative approaches.

\noindent
\textbf{Simulation setup.} The benchmark network was modified by adding seven PV generators at arbitrarily chosen locations marked in Fig.~\ref{fig:grid}. The nominal generation for these installations was chosen in the range~$[600,900]$ kW with power factor~$0.95$. The nominal demands were set at $120\%$ of those reported in~\cite{farivar2012optimal}. 
The dataset of voltages~$\mc D$ for solving~\eqref{eq:relax} was generated by random scaling of the nominal values as described below. Instances of active power loads and generations were created by scaling the respective nominal values for all buses and times independently following uniform distributions~$\mc U[0.4,1.6]$ and~$\mc U[0,2]$, respectively. Reactive power demands were sampled by scaling the nominal values first by scaling factors of corresponding active power to model changes in load levels, and then by an additional factor drawn from a Gaussian distribution~$\mc N(1,0.04)$ to model power-factor variations, which were also saturated to be within $40-160~\%$ range of nominal values. Finally, half of the reactive power generation instances were drawn in a manner similar to the reactive demands. The remaining half were generated by scaling the nominal reactive power generation by factors drawn from~$\mc U[-0.5,2]$ to include instances with reactive power withdrawal from inverters. For boundedness, reactive generation instances were capped to $-50$ to $200\%$ of nominal values. The voltage limits for all nodes were set to $\pm 5\%$ of the nominal values. For the generated samples, $0.05\%$ of the bus-time instances had voltages outside the limits. All numerical tests were conducted on an Apple M1 processor with 8 GB RAM, where all data and code are available at \url{https://github.com/pengzhang233/optimal-measurement-selection-for-voltage-monitoring} ; Gurobi 13.0.2 was used to solve optimization problems.

\begin{table}[t]
\centering
\caption{Complexity-optimality trade-off of fixed threshold-based scenario reduction}
\setlength{\tabcolsep}{3pt}
\begin{tabular}{ccccc}
\toprule
$\sigma$~[p.u.] & $\tilde{H}_R~[\times 10^{-3}]$ & Gap & $|\mc R^l| + |\mc R^u|$  & Time~[s]\\
\midrule
\makecell{$0$} & $5.8 $ & $10^{-3}$ & 6 & 4.4 \\
\makecell{$0.005$} & $6.1 $& $7 \times 10^{-4}$ & 31 & 22.1 \\
\makecell{$0.01$} & $6.8 $& $<10^{-4}$ & 67 & 46.4 \\
\bottomrule
\end{tabular}
\vspace{-0.5em}
\label{table:sigma}
\end{table}

\subsection{Optimality and scalability assessment}
We first evaluated our ability to solve the measurement selection problem~\eqref{eq:relax} using the two scenario-reduction approaches proposed in Section~\ref{sec:scenarioreduction}. The dataset $\mc D$ is constructed with $T=288$ instances, representative of the number of 5-minute periods in a day. The measurement budget for all tests was set to $\rho=8$. The performance of the approach with fixed voltage threshold is expected to depend critically on the choice of margin $\sigma$. We therefore solved~\eqref{eq:relax} for different margin values reported in Table~\ref{table:sigma}. As anticipated, increasing the margins results in a higher number of scenarios selected in $(\mc R^l, \mc R^u)$, a lower optimality gap, and higher solve times. The evaluation of gap follows~\eqref{eq:gap} by solving $2\times 55 \times 288$ LPs in~\eqref{eq:fgtilde} ($\approx 4$~minutes). Based on our threshold for optimality gap $\epsilon=10^{-4}$, we fix $\sigma=0.01$ for the next set of tests.  

\begin{table}[tbp]
\centering
\caption{Performance of two scenario reduction methods averaged over 10 datasets $\mc D$}
\setlength{\tabcolsep}{3pt}
\begin{tabular}{ccccc}
\toprule
& $\tilde{H}_R~[\times 10^{-2}]$ & Gap & $|\mc R^l| + |\mc R^u|$  & Time~[s]\\
\midrule
\makecell{Fixed \\threshold} & $3.3$ & $<10^{-4}$ & 97 & 111.7 \\
\makecell{Constraint \\generation} & $3.3$& $<10^{-4}$ & 21 & 43.6 \\
\bottomrule
\end{tabular}
\label{table:compare}
\end{table}

\begin{figure}[thb]  
    \centering
	\includegraphics[width=0.8\linewidth]{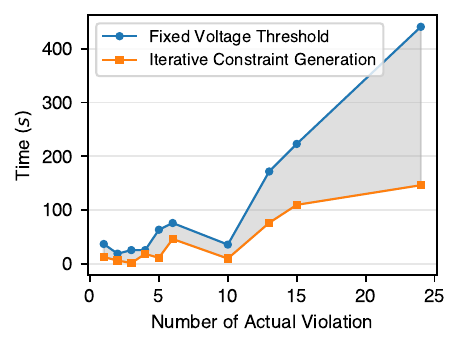}
	\caption{Computational scalability of the proposed scenario reduction methods.}
    \vspace{-1em}
    \label{fig:51}
\end{figure}

To compare the two scenario-reduction approaches, we independently generated 10 datasets $\mc D$, each with $T=288$ instances. For the iterative constraint generation-based approach, we used $\delta^l = \delta^u = 0.005$. Table~\ref{table:compare} reports the cardinalities of the reduced-scenario sets, optimum value and computation time for solving~\eqref{eq:relax}, and the optimality gap~\eqref{eq:gap}, for the two reduction approaches averaged over the 10 datasets. While the computationally taxing optimality check is placed within Algorithm~\ref{alg}, our tests showed that it was run only once at termination, same as the fixed threshold-based approach. To compare the measurement-selection times of the two approaches, we left out the optimality-check time ($\approx 4$~minutes) from the comparison. We observed that the average solve time of~\eqref{eq:relax} with the fixed threshold-based approach is about 2.5 times that with the iterative constraint generation method. Curiously, the average time for the fixed threshold-based approach is much higher than the findings reported in~Table~\ref{table:sigma}. Diving deeper, we observed large variations in the solve times of the fixed threshold-based approach across the ten datasets. Notably, for datasets with a higher number of actual voltage violation instances, indicative of stressed conditions, the fixed threshold-based approach scales poorly; see Fig.~\ref{fig:51}. Overall, the superior optimality and scalability performance of the iterative constraint generation-based approach motivated us to adopt Algorithm~\ref{alg} for the remaining numerical tests.

\subsection{Comparative evaluation of measurement selection approaches}
We conducted a comparative analysis using three frameworks of offline measurement selection and two approaches of real-time safety certification. Concretely, 5 of the 10 datasets $\mc D$ were used for measurement selection ($\mc D_{select}$), while the remaining were used for testing ($\mc D_{test}$), providing distinct datasets for offline computation and online evaluation. Besides Algorithm~\ref{alg}, we implemented a simple baseline ($\textsf{B1}$) that splits the budget $\rho=8$ equally between the nodes witnessing the highest and lowest voltages in $\mc D_{select}$, and a second benchmark ($\textsf{B2}$) based on~\cite{buason2024data} with an identical measurement budget.  For real-time certification, we tested two approaches: ($\textsf{T1}$) solving the LPs~\eqref{eq:lp1f}--\eqref{eq:lp1g}, and ($\textsf{T2}$) checking compliance with the box constraints from ($\textsf{B2}$). In real-time, safety certification via ($\textsf{T2}$) was virtually instantaneous, while ($\textsf{T1}$) took $0.7$~seconds. We first compared $(\textsf{B2})+(\textsf{T2})$ (i.e.~\cite{buason2024data}) against Algorithm~\ref{alg} + ($\textsf{T1}$). Since~($\textsf{T2}$) provides a collective certification for network-wide safety at a given scenario, we deemed the safety uncertified if ($\textsf{T1}$) failed to certify the safety of any node at scenario $t$. The number of uncertified scenarios, out of $T = 288$, for each test dataset is shown in Fig.~\ref{fig52}~\emph{(top)}. We observed that, across all datasets, there were $17$ scenarios with actual voltage violations, which were the only scenarios our approach failed to certify, implying no false alarms. On the other hand, $(\textsf{B2})+(\textsf{T2})$ failed to certify the safety for $58$ scenarios in total.

Guided by the geometric insights in Section~\ref{sec:geometry}, we expected the box constraints of $(\textsf{T2})$ to be the limiting factor in previous tests using the approach from~\cite{buason2024data}. Therefore, we next re-evaluated the selections from Algorithm~\ref{alg}, ($\textsf{B1}$) and ($\textsf{B2}$), all using ($\textsf{T1}$) for real-time safety certification with results reported in Fig.~\ref{fig52}~\emph{(bottom)}, where Fig.~\ref{fig:grid} compares measurement locations. Here ($\textsf{T1}$) certifies safety at nodal granularity for each scenario. Cumulatively over the five test datasets, there were $40$ voltage-violation instances. The number of node-scenario instances left uncertified was $45$ for Algorithm~\ref{alg}, versus $107$ and $57$ for ($\textsf{B1}$) and ($\textsf{B2}$), respectively. Note the slight conservativeness of our approach surfacing at this increased granularity. The poor performance of $(\textsf{B1})$ shows that historically extreme-voltage-experiencing nodes are not necessarily informative of the system-wide voltages. Moreover, although ($\textsf{B2}$) improves under ($\textsf{T1}$) rather than ($\textsf{T2}$), Algorithm~\ref{alg} still outperforms it, confirming the value of informing measurement selection with the real-time test model. We finally validated the performance of LinDistFlow-based ($\textsf{T1}$) against AC PF-based voltage data and observed that all instances certified safe by ($\textsf{T1}$) were indeed within the desired voltage limits.

\begin{figure}[htbp]
  \centering
  \begin{minipage}[b]{\columnwidth}
    \centering
    \subfloat{\includegraphics[width=0.8\textwidth]{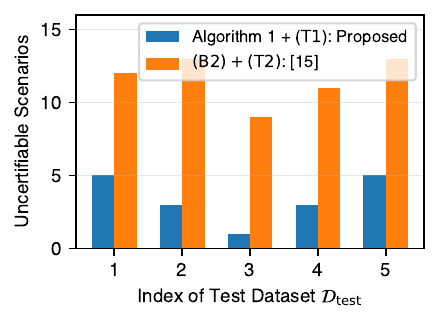}}
  \end{minipage}%
  \vspace{0.cm}
  \begin{minipage}[b]{\columnwidth}
    \centering
    \subfloat{\includegraphics[width=0.8\textwidth]{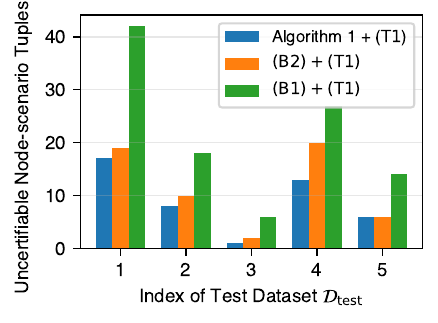}}
  \end{minipage}
  \caption{Number of instances with uncertifiable voltage safety across five test datasets.}
  \vspace{-1 em}
  \label{fig52}
\end{figure}

\begin{figure*}[thb]
    \centering
	\includegraphics[width=0.8\linewidth]{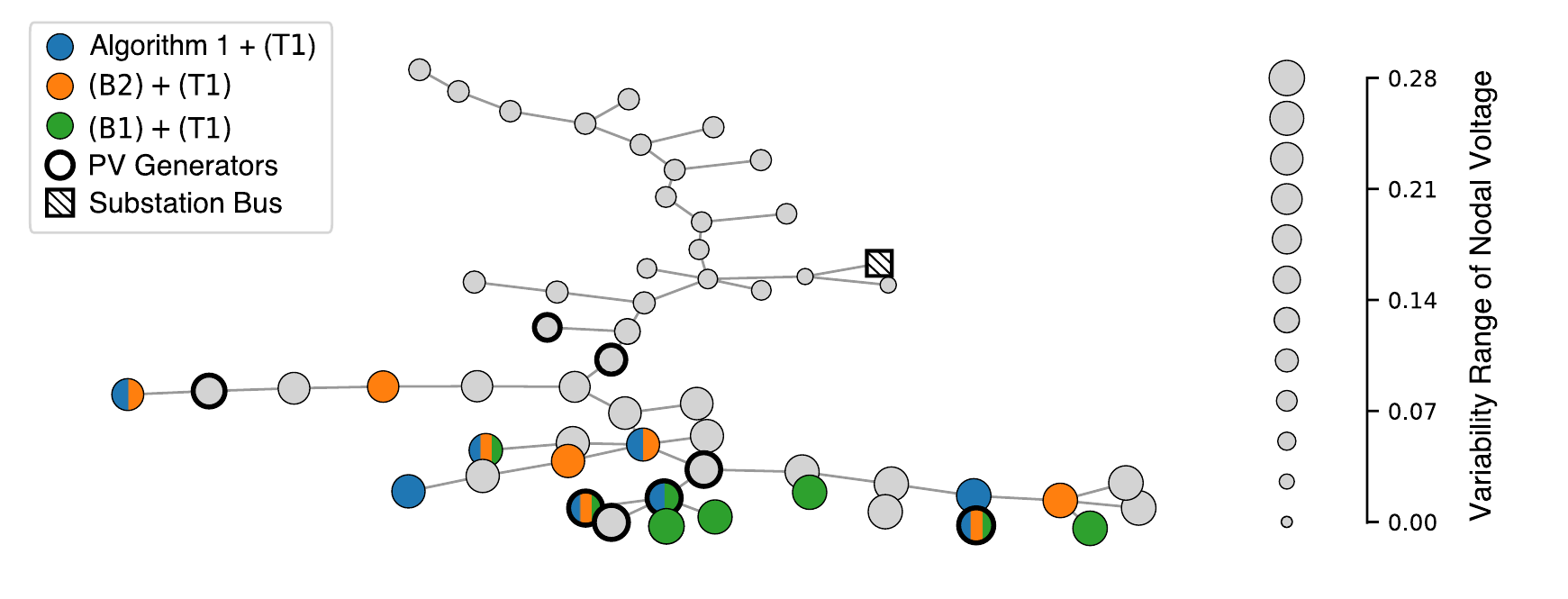}
	\caption{Illustration of selected measurements by different methods on the SCE 56-bus benchmark.}
	\label{fig:grid} 
\end{figure*}

\section{Conclusions} 
\label{sec:conclusion}

This work addressed the selection of a few real-time measurements in unobservable distribution systems that certify network-wide voltage compliance. We characterized the geometry of the safety-certifying measurement set and exposed its non-convexity as a challenge for direct approximation by simple box constraints. This insight motivated a bilevel measurement-selection formulation built on a novel violation metric with a favorable monotonicity property. A strong-duality-based single-level reformulation, together with two scenario-reduction schemes rendered the problem tractable while certifying optimality through a computable gap. Numerical tests on the SCE 56-bus system demonstrated markedly higher certification capability than the competing baselines, with the iterative constraint-generation scheme proving especially scalable. Our future work focuses on extending the framework beyond the LinDistFlow model to higher-accuracy PF representations, accommodating the heterogeneous measurement types available in distribution systems, and modeling measurement noise.



\printbibliography

\end{document}